\newtheorem{theorem}{Theorem}
\newtheorem{lemma}[theorem]{Lemma}
\newtheorem{conjecture}[theorem]{Conjecture}
\newtheorem{claim}{Claim}
\newtheorem{remark}{Remark}
\title{Recoloring bounded treewidth graphs}
\author{Marthe Bonamy \and Nicolas Bousquet}
\begin{document}

\maketitle

% DO NOT REMOVE: Creates space for Elsevier logo, ScienceDirect logo
% and ENDM logo
%\begin{verbatim}\end{verbatim}\vspace{2.5cm}

%\begin{frontmatter}

%\title{Recoloring bounded treewidth graphs}

%\author{Marthe Bonamy\thanksref{ALL}}
%\address{LIRMM\\ Universit\'e Montpellier 2\\ Montpellier, France}

%\author{Nicolas Bousquet\thanksref{ALL}}
%\address{LIRMM\\ Universit\'e Montpellier 2\\ Montpellier, France} \thanks[ALL]{Email: \href{mailto:firstname.lastname@lirmm.fr} {\texttt{\normalshape
%   firstname.lastname@lirmm.fr}}}

%\author{Marthe Bonamy\thanksref{ALL}}
%\author{Nicolas Bousquet\thanksref{ALL}}
%\address{LIRMM, Universit\'e Montpellier 2, France}\thanks[ALL]{firstname.lastname@lirmm.fr}

%\begin{document}

%\maketitle
%\maketitle

\begin{abstract}
Let $k$ be an integer. Two vertex $k$-colorings of a graph are \emph{adjacent} if they differ on exactly one vertex. A graph is \emph{$k$-mixing} if any proper $k$-coloring can be transformed into any other through a sequence of adjacent proper $k$-colorings. Any graph is $(tw+2)$-mixing, where $tw$ is the treewidth of the graph (Cereceda 2006). We prove  that the shortest sequence between any two $(tw+2)$-colorings is at most quadratic, a problem left open in Bonamy et al. (2012).

Jerrum proved that any graph is $k$-mixing if $k$ is at least the maximum degree plus two. We improve Jerrum's bound using the grundy number, which is the worst number of colors in a greedy coloring.

\emph{Keywords:} Reconfiguration problems, vertex coloring, treewidth, grundy number.
\end{abstract}

%\end{frontmatter}

\section{Introduction}

Reconfiguration problems (see~\cite{Gopalan09,ItoD11,ItoD09} for instance) consist in finding step-by-step transformations between two feasible solutions such that all intermediate results are also feasible. Such problems model dynamic situations where a given solution is in place and has to be modified, but no property disruption can be afforded. In this paper our reference problem is vertex coloring. 

In the whole paper, $G=(V,E)$ is a graph where $n$ denotes the size of $V$ and $k$ is an integer. For standard definitions and notations on graphs, we refer the reader to~\cite{Diestel}.
A \emph{(proper) $k$-coloring} of $G$ is a function $f : V(G) \rightarrow \{ 1,\ldots,k \}$ such that, for every edge $xy$, $f(x)\neq f(y)$. 

Two $k$-colorings are \emph{adjacent} if they differ on exactly one vertex. The \emph{$k$-recoloring graph of $G$}, denoted $R_k(G)$, is the graph whose vertices are $k$-colorings of $G$, with the adjacency defined above. Note that two colorings equivalent up to color permutation correspond to distinct vertices.
The graph $G$ is \emph{$k$-mixing} if $R_k(G)$ is connected. Cereceda, van den Heuvel and Johnson characterized the $3$-mixing graphs and provided an algorithm to recognize them~\cite{Cereceda09,CerecedaHJ11}.

Determining if a graph is $k$-mixing is $\mathbf{PSPACE}$-complete for $k \geq 4$~\cite{BonsmaC07}. The \emph{$k$-recoloring diameter} of a $k$-mixing graph is the diameter of $R_k(G)$. In other words, it is the minimum $D$ for which any $k$-coloring can be transformed into any other through a sequence of at most $D$ adjacent $k$-colorings.
The \emph{mixing number} of $G$ is the minimum integer $m(G)$ for which $G$ is $k$-mixing for every $k \geq m(G)$. It can be arbitrarily larger than the minimum $k$ for which $G$ is $k$-mixing~\cite{Cereceda}. Indeed, for complete bipartite graphs minus a matching, the chromatic number equals two and the mixing number is arbitrarily large (see Fig.~\ref{fig:completmoinsmatching}).

\begin{figure}
\centering
\includegraphics[scale=0.7]{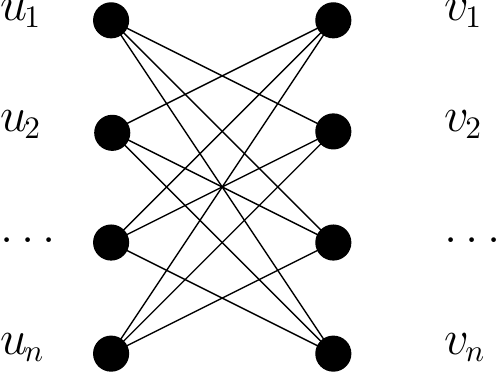}
\caption{In the $n$-coloring where $u_i,v_i$ are given the same color, no vertex can be recolored.}
\label{fig:completmoinsmatching}
\end{figure}

%there is no link between $m(G)$ and the minimum $k$ for which $G$ is $k$-mixing~\cite{Cereceda}.
%In~\cite{Cereceda}, they prove that if $G$ is $k$-mixing then it does not imply that $m(G) \leq k$.

Jerrum~\cite{Jerrum95} proved that $m(G) \leq \Delta(G)+2$, where $\Delta(G)$ denotes the maximum degree. 
Let $x_1,\ldots,x_n$ be an order $\mathcal{O}$ on $V$. We denote by $N(v)$ the neighborhood of $x$. In the \emph{greedy coloring} $C(G,\mathcal{O})$ of $G$ relative to $\mathcal{O}$, every $x_i$ has the smallest color that does not appear in $N(x_i) \cap \{ x_1,\ldots,x_{i-1} \}$. Introduced in~\cite{Christen79}, the \emph{grundy number} $\chi_g(G)$ is the maximum, over all the orders $\mathcal{O}$, of the number of colors used in $C(G,\mathcal{O})$. So $\chi_g(G)$ is the worst number of colors in a greedy coloring of $G$.

\begin{theorem}\label{thm:grundy}
For any graph $G$, if $k \geq \chi_g(G)+1$, then $G$ is $k$-mixing and the $k$-recoloring diameter is at most $4 \cdot \chi_g(G) \cdot n$.
\end{theorem}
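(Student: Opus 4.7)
I would attempt to prove the stronger statement that from any proper $k$-coloring $C$ of $G$ one can reach a fixed ``canonical'' coloring $C_0$ in at most $2\chi_g(G)\cdot n$ recoloring steps; concatenating two such paths then yields the diameter bound $4\chi_g(G)\cdot n$, and connectedness of $R_k(G)$ is automatic.

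For the canonical coloring I pick an ordering $\mathcal O=x_1,\dots,x_n$ that attains the grundy number and let $C_0=C(G,\mathcal O)$, so $C_0$ uses only colors in $\{1,\dots,\chi_g(G)\}$. Since $k\geq \chi_g(G)+1$, the color $k$ is never used in $C_0$ and plays the role of a \emph{spare} color throughout.

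The procedure I have in mind processes the vertices in the reverse order $x_n,x_{n-1},\dots,x_1$, maintaining the invariant that after step $i$ the vertices $x_i,\ldots,x_n$ all carry their $C_0$-colors. To handle $x_i$ I want to change its color to $C_0(x_i)$. Forward neighbors of $x_i$ are already processed, so by the properness of $C_0$ none of them has color $C_0(x_i)$; the only obstacles are back-neighbors $x_j\in N(x_i)$ with $j<i$ whose current color happens to equal $C_0(x_i)$. Each such $x_j$ is moved first---preferably to the spare color $k$---and then $x_i$ itself is recolored. The clean part of the accounting is that once a vertex is parked at $k$ it can never again be a conflict in a later step, because every subsequent target color is at most $\chi_g(G)<k$; so each vertex is recolored at most twice (once when parked, once when processed).

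What I expect to be the main obstacle is that the move to the spare color can be blocked: some back-neighbor of $x_j$ may already be sitting at $k$. Resolving this calls for a more delicate displacement argument that exploits the grundy property: every vertex $v$ of color $c$ in $C_0$ has neighbors realizing each smaller color $1,\dots,c-1$. Leveraging these witnesses recursively should let me unpark the obstructing vertex without cascading indefinitely, with a careful amortized count showing that each vertex $v$ is recolored $O(C_0(v))=O(\chi_g(G))$ times during the whole procedure. Summing $\sum_v 2\,C_0(v)\le 2\chi_g(G)\cdot n$ then yields the required bound, and the triangle inequality through $C_0$ gives diameter at most $4\chi_g(G)\cdot n$.
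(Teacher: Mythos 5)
The crux of the theorem is exactly the step you leave unresolved. Your plan processes vertices in reverse greedy order and ``parks'' conflicting back-neighbors on the single spare color $k$, but with $k=\chi_g(G)+1$ there is only one spare color, and a vertex $x_j$ that must be displaced may have \emph{no} free color at all: its neighbors (both vertices still carrying their original colors, which may include $k$ from the start, and previously parked vertices) can exhibit all $k-1$ colors other than the one $x_j$ currently has. Already on a tree with grundy number $3$ (a center $c$ with three neighbors, each having a pendant leaf) a proper $4$-coloring can color the three neighbors $1,2,3$ and $c$ with $4$, so $c$ cannot be recolored at all until some neighbor moves. Your proposed fix --- ``a more delicate displacement argument'' using the fact that in $C_0$ a vertex of color $c$ has greedy witnesses of colors $1,\dots,c-1$ --- invokes a property of the target coloring $C_0$, not of the current intermediate coloring, and you give no mechanism showing the displacements do not cascade, nor any argument for the asserted amortized bound of $O(C_0(v))$ recolorings per vertex (which moreover contradicts your earlier claim that each vertex is recolored at most twice). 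As it stands this is the same recursive-unblocking scheme that, for $d$-degenerate graphs with $d+2$ colors, is only known to give exponential bounds, so the proposal has a genuine gap rather than a fixable bookkeeping issue.

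For comparison, the paper avoids cascading altogether. It first turns the arbitrary coloring $\alpha$ into a greedy coloring $\gamma$ by recoloring the color classes of $\alpha$ in order with the smallest available color ($\leq n$ steps, and now only colors $1,\dots,\chi_g(G)$ are used, freeing color $\chi_g(G)+1$ globally). It then fixes the first color class $V_1^\beta$ of an \emph{optimal} coloring $\beta$: vertices wrongly colored $1$ are pushed to the free color, and $V_1^\beta$ is recolored $1$ ($\leq n$ steps). The key structural point is that deleting $V_1^\beta$ strictly decreases the grundy number (every remaining vertex has a neighbor in $V_1^\beta$ since $\beta$ is optimal), so the hypothesis $k-1\geq \chi_g(G')+1$ survives and induction on $\chi(G)$ gives $d(\alpha,\beta)\leq 2\chi(G)n$; routing two colorings through $\beta$ yields the $4\chi_g(G)n$ bound. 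If you want to salvage your approach, you need an analogue of this ``the invariant is preserved with one fewer color'' argument; the reverse-order parking scheme alone does not provide it.
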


Section~\ref{sect:grundy} is devoted to a proof of Theorem~\ref{thm:grundy}. Theorem~\ref{thm:grundy} improves Jerrum's bound since $\chi_g(G) \leq \Delta(G)+1$. And it can be arbitrarily smaller, on stars for instance. 

Besides, the bound is tight on some graphs satisfying $m(G)=\Delta(G)+2=\chi_g(G)+1$~\cite{Cereceda} (see Fig.~\ref{fig:completmoinsmatching}, complete bipartite graphs minus a matching for instance). Nevertheless, $m(G)$ is not bounded by a function of $\chi_g(G)$ since for any $k$, some tree $T_k$ satisfies $\chi_g(T_k)=k$ and $m(T_k)=3$~\cite{Beyer82}.
In addition, unlike the maximum degree, the grundy number is NP-hard to compute~\cite{Zaker06}.

Graphs of treewidth $k$, being $k$-degenerate, are $(k+2)$-mixing~\cite{Cereceda}. However, the best known upper-bound on the recoloring diameter is exponential. In Section~\ref{sect:prooftw}, we prove that the recoloring diameter is polynomial for bounded treewidth graphs. Given a graph $G$ and an integer $k$, it is NP-complete to decide if $tw(G) \leq k$~\cite{ArnborgCP87}. Nevertheless, for every fixed $k$, there is a linear time algorithm to decide if the treewidth is at most $k$ (and find a tree decomposition)~\cite{Bodlaender93}.

\begin{theorem}\label{thm:tw}
For every graph $G$, if $k \geq tw(G)+2$, then $G$ is $k$-mixing and its $k$-recoloring diameter is at most $2 \cdot (n^2+n)$.
\end{theorem}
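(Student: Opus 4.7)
The plan is to proceed by induction on $n=|V(G)|$. The base case is trivial. For the inductive step, fix a tree decomposition $(T,X)$ of $G$ of width $tw$ and choose a leaf bag $B$ of $T$ together with a vertex $v\in B$ appearing in no other bag; such $v$ has $|N(v)|\leq tw$ with $N(v)\subseteq B\setminus\{v\}$. The aim is the recurrence $f(n)\leq f(n-1)+4n$, where $f(n)$ denotes the diameter bound being established; this yields $f(n)\leq 2(n^2+n)$.

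The inductive step has two phases. Phase 1 spends $O(n)$ recolorings to transform $\alpha$ into a coloring in which $v$ takes its target color $\beta(v)$. If no neighbor of $v$ currently uses $\beta(v)$, a single recoloring suffices. Otherwise, I iteratively move each blocking neighbor $u$ (with current color $\beta(v)$) toward its own target $\beta(u)\neq\beta(v)$; this may cascade, because moving $u$ to $\beta(u)$ requires that no neighbor of $u$ currently have color $\beta(u)$. The crucial observation is that every step of such a cascade permanently sets some vertex to its $\beta$-target, so the cascade has length at most $n$ and contributes $O(n)$ to the total cost. Phase 2 then invokes the inductive hypothesis on $G-v$ (which still has treewidth at most $tw$) to transform the remaining vertices to $\beta$.

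A subtle point is that the inductive recoloring in Phase 2 must never transiently assign color $\beta(v)$ to a neighbor of $v$; otherwise $v$'s color would be forced to change. I address this by strengthening the inductive statement so as to accommodate forbidden-color constraints on the small set $N(v)$, which, by the choice of $v$, is contained in a single bag of the tree decomposition of $G-v$. This is precisely the shape of constraint that the reduction produces, so the induction can pass it down without losing its form.

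The main obstacle will be making the strengthened induction carry through without degrading the $2(n^2+n)$ bound, and ensuring that the cascades of Phase 1 remain compatible with the forbidden-color constraints inherited from the parent inductive call. A careful accounting of the per-step cost of Phase 1 (at most $4n$ recolorings) together with the induction depth is what delivers the claimed quadratic diameter.
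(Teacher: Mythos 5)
Your plan runs into two genuine gaps, and they are exactly the difficulties the paper's machinery is built to avoid. First, the Phase~1 cascade argument does not work as stated. The blocking relation can be cyclic: for instance, if $u$ and $w$ are adjacent and the current coloring has $u$ colored $\beta(w)$ and $w$ colored $\beta(u)$, then moving $u$ to $\beta(u)$ is blocked by $w$ and moving $w$ to $\beta(w)$ is blocked by $u$, so no cascade step ``permanently sets some vertex to its $\beta$-target.'' To break such a deadlock you must recolor some vertex to a color that is neither its current nor its target color --- and such a free color need not even exist locally, since a graph of treewidth $tw$ can have vertices of arbitrarily large degree (a star already shows this for $tw=1$, $k=3$). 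Once vertices get parked on non-target colors they may have to be recolored again later, so the potential argument giving a length-$n$ cascade, and hence the recurrence $f(n)\leq f(n-1)+4n$, collapses. This is not a technicality: the naive ``fix one low-degree vertex on its target color and recurse'' strategy is precisely the classical argument for $k$-degenerate graphs, and it is only known to give an exponential bound ($2^n$, as cited in the paper), because freeing up a blocked color forces recursive recolorings of the rest of the graph.

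Second, the strengthened induction of Phase~2 is both unproven and structurally unstable. Recoloring $G-v$ while forbidding color $\beta(v)$ on $N(v)$ is equivalent to recoloring $G$ with $v$ pinned, which is at least as delicate as the original problem; and the constraints accumulate along the recursion --- at depth $d$ you carry one forbidden-color constraint per ancestor, and these forbidden sets are not contained in a single bag of the current decomposition, so the inductive statement does not ``keep its form'' as you claim. Moreover the Phase~1 cascades at deeper levels may be forced to recolor exactly the constrained vertices with exactly the forbidden colors, and nothing in the argument rules this out. The paper circumvents all of this by never trying to march vertices to their final colors: it works with a complete tree decomposition, makes the coloring coherent family by family (each family meeting every bag exactly once), and then reduces the problem to recoloring a clique on $tw+1$ merged vertices, where a simple digraph argument gives a linear bound. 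If you want to salvage your approach you would need to prove the pinned/forbidden-color version of the theorem outright, which is essentially the full difficulty of the result.
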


The quadratic bound on the recoloring diameter was known for chordal graphs~\cite{BonamyJ12}, but its generalization to bounded treewidth graphs was left open. As shown in the case of chordal graphs~\cite{BonamyJ12} (which is a subclass of graphs of treewidth $\omega(G)$), the mixing number is tight, and the recoloring diameter is tight up to a constant factor.

\section{Preliminaries}
Let us first recall some classical definitions on sets. Let $X$ and $Y$ be two subsets of $V$. The set $X \setminus Y$ is the subset of elements $x \in X$ such that $x \notin Y$. By abuse of notation, given a set $X$ and an element $x$, $X\setminus x$ denotes $X \setminus \{ x\}$. The \emph{size} $|X|$ of $X$ is its number of elements.

Let $G=(V,E)$ be a graph. The \emph{neighborhood} of a vertex $x$, denoted by $N(x)$ is the subset of vertices $y$ such that $xy \in E$. The length of a path is its number of edges. The \emph{distance} between two vertices $x$ and $y$, denoted $d(x,y)$, is the minimum length of a path between these two vertices. When there is no path, the distance is infinite. The \emph{distance} between two $k$-colorings of $G$ is implicitely the distance between them in the recoloring graph $R_k(G)$. Let us first recall a classical result on recoloring.

\begin{lemma}\label{lemma:clique}
If $k \geq n+1$, any $k$-coloring of $K_n$ can be transformed into any other by recoloring every vertex at most twice.
\end{lemma}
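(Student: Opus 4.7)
The plan is to transform $c_1$ into $c_2$ by analyzing a ``wants'' digraph $H$ on $V(K_n)$ with an arc $v \to u$ whenever $v \neq u$ and $c_2(v) = c_1(u)$. Since $K_n$ forces distinct colors at every vertex, $c_1$ and $c_2$ are both injections $V \to \{1, \ldots, k\}$, which makes every vertex of $H$ have in- and out-degree at most $1$. Hence $H$ is a disjoint union of isolated vertices (where $c_1(v) = c_2(v)$, requiring no action), directed paths, and directed cycles. I will treat these components one by one, maintaining the invariant that the current coloring uses only $n$ of the $k \geq n+1$ colors so that a free color is always available.

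For a path $v_1 \to \cdots \to v_\ell$ in $H$, the fact that $v_\ell$ has no out-neighbor means $c_2(v_\ell) \notin c_1(V)$. I recolor $v_\ell, v_{\ell-1}, \ldots, v_1$ in that order, each directly to its $c_2$-color. The first step is legal because $c_2(v_\ell)$ is currently unused; for $i < \ell$, the target color $c_2(v_i) = c_1(v_{i+1})$ has just been vacated by $v_{i+1}$. So each path vertex is recolored exactly once.

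For a cycle $v_1 \to \cdots \to v_\ell \to v_1$ in $H$, at least one color $f$ is unused by the current coloring, so I recolor $v_1$ to $f$, freeing $c_1(v_1) = c_2(v_\ell)$. Then I recolor $v_\ell, v_{\ell-1}, \ldots, v_2$ to their $c_2$-colors in this order, each step being legal because the previous step vacated its target. Finally I recolor $v_1$ from $f$ to $c_2(v_1) = c_1(v_2)$, which is now free. Thus $v_1$ is recolored twice and every other vertex of the cycle once, and the coloring is again a proper $k$-coloring on $n$ vertices so the free-color invariant persists for the next component.

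The main obstacle is the bookkeeping to verify that at every step the chosen target color is globally unused, and not merely unused within the current component. This reduces to noting that distinct components of $H$ have disjoint sets of $c_2$-targets (by injectivity of $c_2$), so the processing of earlier components cannot park a currently needed color at another vertex; and that within a component my chosen processing order recolors each target-holder before the vertex wanting that color.
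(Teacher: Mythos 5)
Your proposal is correct and follows essentially the same route as the paper: the same ``wants'' digraph (arc from $x$ to $y$ when the target color of $x$ is currently held by $y$), the same observation that injectivity on $K_n$ makes it a disjoint union of paths and cycles, paths handled by recoloring from the sink backwards, and cycles broken by parking one vertex on a free color available since $k \geq n+1$, giving at most two recolorings per vertex. Your component-by-component bookkeeping is just a more explicit version of the paper's argument.
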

\begin{proof}
Let $\alpha, \beta$ be two colorings of $K_n$. Let $D$ be the digraph on $n$ vertices with an arc $xy$ if $\beta(x)=\alpha(y)$. Informally $xy$ is an arc if the color of $y$ (in $\alpha$) prevent the recoloring of $x$. No vertices of $\beta$ are colored identically, so for every $x$, $d^+(x) \leq 1$. By symmetry on $\alpha$, $d^-(x) \leq 1$. Hence $D$ is a union of directed paths and of circuits. 

Let $x_0,x_1,\ldots,x_k,x_0$ be a circuit. Since $k \geq n+1$, $x_0$ can be recolored with a free color. We have $d^+(x_k)=0$. And the number of circuit strictly decreases. Indeed $x_0,\ldots,x_k$ is still an oriented path and $d^+(x_k)=0$. Since every vertex has an outdegree at most $1$, no vertex in $\{x_0,\ldots,x_k\}$ can be on a circuit, i.e. the number of circuit strictly decreases.
So by recoloring every vertex at most once,  we can assume that there is no circuit. Therefore a vertex $x$ satisfies $d^+(x)=0$. We can recolor $x$ with $\beta(x)$ which does not create any arc in $D$.
\end{proof}

\section{Mixing number and grundy number}\label{sect:grundy}

This section is devoted to a proof of Theorem~\ref{thm:grundy}. An \emph{optimal coloring} of $G$ is a greedy $\chi(G)$-coloring.
%a $\chi(G)$-coloring of $G$ such that every vertex of color $i$ has a neighbor of every color smaller than $i$. In other words, no vertex can be recolored with a smaller color. 
Theorem~\ref{thm:grundy} is derived from the following lemma.

\begin{lemma}\label{lem:grundystep}
Let $G$ be a graph on $n$ vertices, and $k \geq \chi_g(G)+1$. For any $k$-coloring $\alpha$ of $G$ and any optimal coloring $\beta$ of $G$, we have $d(\alpha,\beta)\leq 2 \cdot \chi(G) \cdot n$.
\end{lemma}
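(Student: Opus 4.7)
Let $\sigma = x_1, \ldots, x_n$ be an ordering of $V(G)$ along which the greedy coloring is $\beta$, and set $B_c := \beta^{-1}(c)$ for each $c \in \{1, \ldots, \chi(G)\}$. Since $k \geq \chi_g(G) + 1 \geq \chi(G) + 1$, at least one color lies beyond the palette used by $\beta$; we call such a color a \emph{buffer}. The plan is to transform $\alpha$ into $\beta$ in $\chi(G)$ phases so that, at the end of phase $c$, every vertex of $B_1 \cup \ldots \cup B_c$ carries its $\beta$-color, an invariant that will be preserved in all subsequent phases.

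Within phase $c$, I process each $v \in B_c$ whose current color differs from $c$: I first \emph{clear} every neighbor $w$ of $v$ currently colored $c$, and then recolor $v$ to $c$. Since $B_1 \cup \ldots \cup B_{c-1}$ is already $\beta$-colored by the invariant and $B_c$ is an independent set, every blocker $w$ necessarily lies in $B_{c+1} \cup \ldots \cup B_{\chi(G)}$, so that $\beta(w) > c$. To clear $w$, I will recolor it to $\beta(w)$ whenever this color is free in $w$'s current neighborhood, and otherwise to a buffer color in $\{\chi(G)+1, \ldots, k\}$.

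The main obstacle is to certify that some admissible target color always exists for the blocker $w$, and this is exactly where the hypothesis $k \geq \chi_g(G) + 1$ is crucial. By the invariant, the $\beta$-colored neighbors of $w$ (those in $B_1 \cup \ldots \cup B_{c-1}$) use only colors in $\{1, \ldots, c-1\}$, all of which are strictly below $\beta(w)$; combining this with the extra $k - \chi(G) \geq 1$ colors of the buffer range should leave enough flexibility to choose a free color for $w$, even when $\beta(w)$ itself is blocked. Fleshing out this existence argument -- possibly via a recursive clearing when $\beta(w)$ is blocked by an unprocessed vertex of $B_{\geq c}$ -- is the step I expect to require the most care.

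For the count, once a blocker $w$ is cleared its new color differs from $c$, so $w$ cannot block any further vertex of $B_c$ during the same phase. Each vertex is therefore recolored at most twice per phase: once potentially as a blocker, and once as an owner in $B_c$ being set to $c$. Phase $c$ thus contributes at most $2n$ recolorings, and summing over the $\chi(G)$ phases yields $d(\alpha, \beta) \leq 2 \cdot \chi(G) \cdot n$, as claimed.
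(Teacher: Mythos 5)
There is a genuine gap, and it sits exactly where you flagged it: the existence of an admissible color for a blocker $w$ is not established, and your sketch never actually uses the hypothesis $k \geq \chi_g(G)+1$ in a way that could establish it. Your invariant only controls the colors of $w$'s neighbors in $B_1 \cup \ldots \cup B_{c-1}$ (they use colors $<c$); the unprocessed neighbors of $w$ in $B_{c+1} \cup \ldots \cup B_{\chi(G)}$ still carry essentially arbitrary colors inherited from $\alpha$, and since the degree of $w$ is not bounded by $k$, they may cover all of $\{1,\ldots,k\}\setminus\{c\}$ --- including $\beta(w)$ and the entire buffer range. So no single recoloring of $w$ need exist, and the ``recursive clearing'' you defer to is the whole difficulty: it has no termination or cost analysis, it can cascade (destroying the ``at most twice per phase'' count on which your bound rests), and nothing in the recursion exploits the grundy number. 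The latter cannot be dispensed with: a scheme whose only quantitative input is $k \geq \chi(G)+1$ would prove that every graph is $(\chi+1)$-mixing, which is false (complete bipartite graphs minus a perfect matching are $2$-chromatic but not even $3$-mixing, as recalled in the introduction).

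For contrast, the paper avoids per-vertex clearing altogether. It argues by induction on $\chi(G)$: starting from $\alpha$, it recolors the classes $V^\alpha_1, V^\alpha_2,\ldots$ greedily (one recoloring per vertex), producing a coloring $\gamma$ that, being a greedy coloring, uses at most $\chi_g(G)$ colors --- this is where the grundy number enters, and it frees the color $\chi_g(G)+1$ \emph{globally}. That free color is then used to park the vertices of $V_1^\gamma \setminus V_1^\beta$, after which $V_1^\beta$ is set to color $1$ (a second pass, again one recoloring per vertex). Finally it recurses on $G' = G \setminus V_1^\beta$ with $k-1$ colors, the key point being that since $\beta$ is a \emph{greedy} optimal coloring, every vertex of $G'$ has a neighbor in $V_1^\beta$, whence $\chi_g(G') < \chi_g(G)$, so the inductive hypothesis applies with one fewer color; the cost telescopes to $2\chi(G)\cdot n$. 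If you want to salvage your phase-based plan, you would need an analogous global mechanism (e.g., a greedy compression pass before each phase) rather than local buffer colors.
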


\begin{proof}
Let us prove it by induction on $\chi(G)$.

If $\chi(G)=1$, $G$ has no edge. Thus we can recolor the vertices independently. In $n$ steps, we can transform $\alpha$ into $\beta$.

Assume now that $\chi(G)\geq 2$. For any integer $i$ and any coloring $\alpha$, $V_i^\alpha$ is the set of vertices of color $i$ in $\eta$. Iteratively on $i$ from $1$ to $\ell$, we recolor the vertices of $V^{\alpha}_i$ with the smallest color for which the coloring is still proper. The resulting coloring $\gamma$ of $G$ is the greedy coloring relative to the order $V^{\alpha}_1, V^{\alpha}_2, \ldots, V^{\alpha}_{\ell}$. Hence $\gamma$ is an (at most) $\chi_g(G)$-coloring. In addition, $d(\alpha, \gamma) \leq n$, since no vertex is recolored twice. 

Since no vertex is colored with $\chi_g(G)+1$ in $\gamma$ and $k \geq \chi_g(G)+1$, we recolor vertices of $V_1^{\gamma}\setminus V_1^{\beta}$ with color $\chi_g(G)+1$. We then recolor vertices of $V_1^{\beta}$ with $1$ if needed. The resulting coloring $\delta$ satisfies $V_1^{\delta}=V_1^{\beta}$. In addition, $d(\gamma,\delta) \leq n$, for no vertex is recolored twice. 

Let us now prove that the induction hypothesis holds on $G'=G(V \setminus V_1^{\beta})$ with $\ell-1$ colors. We have $\chi(G') = \chi(G)-1$. In addition, $\chi_g(G') < \chi_g(G)$. Indeed, assume that there is an order $\mathcal{O}$ on $V \setminus V_1^{\beta}$ such that $\chi_g(G')= \chi_g(G)$. Consider the order $\mathcal{O}'=(V_1^{\beta},\mathcal{O})$ on $V$. Every vertex of $\mathcal{O}$ has a neighbor on $V_1^{\beta}$ (since $\beta$ is optimal), then the greedy coloring relative to $\mathcal{O}'$ needs $\chi_g(G)+1$ colors for $G$ which is impossible. So we can apply the induction hypothesis on $G'$ with $k-1$ colors (the color $1$ is forgotten). This ensures that $G'$ can be recolored in $2 \cdot (\chi(G)-1) \cdot |V(G')| \leq 2 \cdot (\chi(G)-1) \cdot n$ steps.

Consequently, $d(\alpha,\beta)\leq d(\alpha,\gamma)+d(\gamma,\delta)+d(\delta,\beta) \leq 2 \cdot \chi(G) \cdot n$.
\end{proof}

\section{Bounded treewidth graphs}\label{sect:prooftw}

This aim of this section consists in proving Theorem~\ref{thm:tw}. A \emph{tree} is a connected graph without cycles. In order to avoid confusion, its vertices are called \emph{nodes}. A \emph{tree decomposition} of $G$ is a tree $T$ such that:
\begin{itemize}
 \item To every node $u$ of $T$, we associate a \emph{bag} $B_u \subseteq V$.
 \item For every edge $xy$ of $G$, there is a node $u$ of $T$ such that both $x$ and $y$ are in $B_u$.
 \item For every vertex $x \in V$, the set of nodes of $T$ whose bags contain $x$ form a non-empty subtree in $T$.
%of nodes $u$ of $T$ such that $x \in B_u$ form a non-empty subtree of $T$.
\end{itemize}
The \emph{size} of a tree decomposition $T$ is the largest number of vertices in a bag of $T$, minus one. The \emph{treewidth} $tw(G)$ of $G$ is the minimum size of a tree decomposition of $G$.

A \emph{chordal graph} is a graph that admits a perfect elimination ordering: that is, the vertices of the graphs can be ordered $v_1,v_2,\cdots,v_p$ in such a way that the neighborhood of any vertex $v_i$ in $\{v_1,v_2,\vdots,v_{i-1}\}$ forms a clique. Any chordal graph $G$ admits a tree decomposition whose bags are the maximal cliques of $G$.

Actually, the tree decomposition of any graph $G$ can be viewed as a chordal graph $H$ with vertex set $V(G)$ that admits $G$ as a subgraph ($H$ is a \emph{surgraph} of $G$). 
Informally, we transform step-by-step any $(tw+2)$-coloring of a graph into a $(tw+2)$-coloring of a ''good'' chordal surgraph with the same treewidth.

We first introduce particular tree decompositions, called complete tree decompositions. In such decompositions, all the bags have exactly the same size and any two adjacent bags differ on exactly one vertex. Two vertices are parents if their subtrees are, in some sense, adjacent. A $V$-coherent coloring is a coloring where parents are colored identically.

The proof is divided into two parts. First we prove that the distance between $V$-coherent colorings is linear. We then prove that any coloring can be transformed into a $V$-coherent coloring with a quadratic number of recoloring steps as long as the number of colors is at least $tw(G)+2$.

%%%%Dire que hereditary coloring <=> coloring du chordal graph correspondant ?

\subsection{Families}

\begin{figure}
 \centering
  \includegraphics[scale=1]{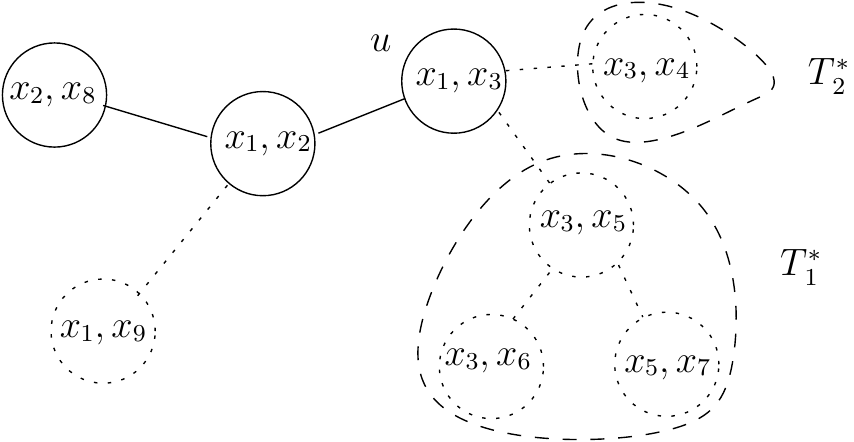}
  \caption{A $1$-complete tree decomposition $T$.}
  \label{fig:completedecomp}
\end{figure}

A tree decomposition $T$ of a graph $G$ is \textit{$\ell$-complete} when every bag has size $\ell+1$ and any two adjacent nodes $u,v$ satisfy $B_u \cap B_v = \ell$. In other words, for every edge $uv$ of $T$, there exists a vertex $x$ such that $x = B_u \setminus B_v$. Let $X \subseteq V$. The tree decomposition $T[V \setminus X]$ is the same tree as $T$ except that the bag of every node $u$ is $B_u \setminus X$, and that every edge $uv$ of $T$ is contracted if $B_u\setminus X \subseteq B_v \setminus X$. In Fig.~\ref{fig:completedecomp}, the full-line edges subtree is $T[V \setminus \{ x_4,x_5,x_6,x_7,x_9 \}]$. In this subsection, we recall classical properties of complete tree decompositions. The first remark is an immediate consequence of the definition.

\begin{remark}\label{rem:subtreelcomplete}
Any connected subtree of an $\ell$-complete tree decomposition is still $\ell$-complete.
\end{remark}

A \emph{baby} is a vertex of $V$ that appears in exactly one bag $B_u$, where $u$ is a leaf of $T$. Note that all the neighbors of a baby $x$ are in $B_u$. In Fig.~\ref{fig:completedecomp}, vertex $x_8$ is a baby.

\begin{remark}\label{rem:lcomplete}
Let $T$ be an $\ell$-complete tree decomposition. If $x$ is a baby then $T[V \setminus x]$ is $\ell$-complete.
\end{remark}
\begin{proof}
Let $u$ be the unique node whose bag contains $x$. Then the only modified bag in $T[V \setminus x]$ is $B_u$. Let $v$ be the father of $u$ in $T$. Since $T$ is complete, $B_u \setminus B_v = x$ in $T$, so the edge $uv$ is contracted in $T[V \setminus x]$. Therefore $T[V \setminus x]$ is exactly $T \setminus u$ which is $\ell$-complete by Remark~\ref{rem:subtreelcomplete}.
\end{proof}

We first prove that every graph admits complete tree decompositions. Then we derive from it the notion of parents and family between vertices of $G$.

\begin{lemma}\label{completetree}
For every graph $G$, if $n-1 \geq \ell \geq tw(G)$ then $G$ admits an $\ell$-complete tree decomposition.
\end{lemma}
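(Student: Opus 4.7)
The plan is to reduce to the chordal case and then peel off simplicial vertices by induction on $n$. Since $tw(G)\leq \ell$, there exists a chordal surgraph $H$ of $G$ on the same vertex set with $\omega(H)\leq \ell+1$ (this is exactly the surgraph viewpoint recalled just before the lemma). Any tree decomposition of $H$ is also a tree decomposition of $G$, so it suffices to prove the lemma under the additional assumption that $G$ itself is chordal with $\omega(G)\leq \ell+1$.

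I would then proceed by induction on $n$. If $n=\ell+1$, the single-node decomposition with bag $V(G)$ is trivially $\ell$-complete (there is no edge of $T$ to check). Assume $n\geq \ell+2$. Since $G$ is chordal, it has a simplicial vertex $v$, and $N(v)$ being a clique gives $|N(v)|\leq \omega(G)-1\leq \ell$. The graph $G\setminus v$ is chordal, still has clique number at most $\ell+1$, and has $n-1\geq \ell+1$ vertices, so by induction it admits an $\ell$-complete tree decomposition $T'$. Using the classical fact that every clique of a graph lies in a single bag of any of its tree decompositions, pick a bag $B_{u'}$ of $T'$ with $N(v)\subseteq B_{u'}$, and attach to $u'$ a new leaf $u$ whose bag is $B_u=\{v\}\cup S$, where $S\subseteq B_{u'}$ is any subset of size exactly $\ell$ containing $N(v)$. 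Such an $S$ exists because $|N(v)|\leq \ell<\ell+1=|B_{u'}|$.

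It then remains to verify that the resulting tree $T$ is a valid $\ell$-complete tree decomposition of $G$. All bags have size exactly $\ell+1$, and on the single new edge $uu'$ we have $|B_u\cap B_{u'}|=|S|=\ell$; together with the $\ell$-completeness of $T'$ (inherited from the induction hypothesis) this gives $\ell$-completeness of $T$. Edge coverage is straightforward: every edge of $G\setminus v$ is covered by $T'$, and every edge $vw$ with $w\in N(v)\subseteq B_u$ is covered by $B_u$. The subtree property for $v$ is trivial (its only bag is $B_u$); for each $w\in S$ the new occurrence of $w$ in $B_u$ is adjacent in $T$ to $u'\in T'$, which already lies in the $T'$-subtree of $w$, so the extended subtree is still connected; subtrees of vertices outside $B_u$ are unchanged.

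The only non-routine ingredient in the whole argument is the reduction to the chordal case (which the paper has essentially announced in the preceding paragraph) together with the classical ``clique lies in a single bag'' lemma used to find $B_{u'}$. All other steps are elementary counting and standard verifications, so I do not expect a substantial obstacle.
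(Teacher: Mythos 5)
Your proof is correct, but it reaches the conclusion by a different mechanism than the paper. Both arguments share the same skeleton: induct on $n$, use the single bag $V(G)$ when $n=\ell+1$, and otherwise peel off one vertex, find a bag of the inductively obtained $\ell$-complete decomposition that contains its neighborhood, and attach a new leaf whose bag swaps one vertex of that bag for the peeled vertex. The difference lies in how the peeled vertex and the attachment bag are found. You first pass to a chordal surgraph $H$ with $\omega(H)\leq \ell+1$ (legitimate, since an $\ell$-complete tree decomposition of $H$ is in particular one of $G$, and the completeness conditions only concern the bags), then peel a simplicial vertex $v$ and invoke the classical fact that the clique $N(v)$ lies entirely in some bag of any tree decomposition of $G\setminus v$. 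The paper instead stays with $G$ itself: it fixes a \emph{minimal} tree decomposition $T$ (no bag contained in another), peels a ``baby'' vertex $x$ taken from a leaf bag of $T$, and strengthens the induction hypothesis by carrying the invariant that every bag of $T$ is contained in a bag of the complete decomposition being built; this invariant is what guarantees a bag containing $B_u\setminus x\supseteq N(x)$, playing exactly the role your Helly-type clique-in-a-bag lemma plays. What your route buys is that it offloads the bookkeeping onto two standard facts (chordal completion of width $tw$, cliques lie in bags) and needs no strengthened induction statement; what the paper's route buys is self-containedness, avoiding the chordal detour and the Helly property altogether. Your verifications (bag sizes, the intersection of size $\ell$ on the new edge $uu'$ since $v\notin B_{u'}$, edge coverage, and connectivity of the subtrees of the vertices of $S$) are all sound, so I see no gap.
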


\begin{proof}
A tree decomposition of $G$ is \emph{minimal} when every bag has size at most $tw(G)+1$ and no bag is contained in another. Every graph admits a minimal tree decomposition. Indeed, by definition of $tw(G)$, every graph $G$ admits a tree decomposition $T$ whose every bag has size at most $tw(G)+1$. And if two adjacent nodes $u,v$ in $T$ verify $B_u \subseteq B_v$, then the edge $uv$ can be contracted.

Let $T$ be a minimal tree decomposition of $G$.
We build inductively an $\ell$-complete tree decomposition $T_c$ of $G$ such that every bag of $T$ is contained in a bag of $T_c$.

If $n=\ell+1$, then the tree decomposition consisting of a single node with bag $V(G)$ is $\ell$-complete.

If $n\geq \ell+2$, then $T$ has at least two nodes since every vertex is contained in at least one bag. Let $u$ be a leaf of $T$ and $v$ be the neighbor of $u$. Since $T$ is minimal, there is a vertex $x$ in $B_u \setminus B_v$. Note that $x$ is a baby. Otherwise the subset of nodes whose bags contain $x$ would not be a subtree of $T$ since $x\notin B_v$ and $u$ is a leaf of $T$. Let $T'=T[V \setminus x]$.

By induction hypothesis, $G \setminus x$ admits an $\ell$-complete tree decomposition $T'_c$ where every bag of $T'$ is contained in a bag of $T'_c$. So some node $w$ of $T'_c$ satisfies $(B_u \setminus x) \subseteq B_w'$. Since $|B'_w|= \ell+1 \geq tw(G)+1$, some vertex $y$ of $B_w'$ is not in $B_u$. We consider $T_c$ built from $T'_c$ by adding a leaf $u'$ attached on $w$ whose bag is $(B_w' \cup  x ) \setminus y$. Then $T_c$ is an $\ell$-complete tree decomposition of $G$ with the required property with regards to $T$.
\end{proof}

Let $T$ be a complete tree decomposition. Note that $|B_u \setminus B_v|=|B_v \setminus B_u|=1$ for every edge $uv$. Two vertices $x,y \in V$ are \textit{$T$-parents} if there are two adjacent nodes $u,v$ of $T$, such that $x=B_u \setminus B_v$, and $y=B_v \setminus B_u$. In other words, vertices $x$ and $y$ are $T$-parents if the subtree of the nodes containing $x$ in their bags and the subtree of the nodes containing $y$ in their bags do not intersect, but are connected by an edge ($uv$ in this case). Also remark that the notion of parents is symmetric: if $x$ is a parent of $y$ then $y$ is a parent of $x$.

The \emph{family relation} is the transitive closure of the parent relation. A \emph{family} is a class of the family relation. In Fig.~\ref{fig:completedecomp}, the families are $\{x_1,x_4,x_5,x_6,x_8\}$ and $\{x_2,x_3,x_7,x_9\}$. The partition induced by the families is called the \emph{family partition}.  In Fig.~\ref{fig:completedecomp}, vertices $x_2$ and $x_3$ are parents.

\begin{remark}\label{rem:family}
The family partition of any $\ell$-complete tree decomposition exists and is unique. Each family contains exactly one vertex in every bag. So there are $\ell+1$ families, which are stable sets.
\end{remark}

\begin{proof}
By induction on $T$. If $T$ has a single node $u$, then no vertex has a parent. So each family is a single vertex. 

Assume $T$ has at least two nodes. Let $u$ be a leaf of $T$ and $v$ be its adjacent node. Note that the family partitions of $T$ are the extensions of those of $T \setminus u$. The vertices $x=B_u \setminus B_v$ and $y=B_v \setminus B_u$ are parents and $y$ is the unique parent of $x$.
Since $u$ is a leaf of $T$, $T\setminus u$ is still $\ell$-complete by Remark~\ref{rem:subtreelcomplete}. 

By induction, $B_v$ contains exactly one vertex of every family of the unique family partition of $T[V \setminus u]$. Since $B_u =B_v \cup x \setminus y$, and since $y$ is the unique parent of $x$, we can uniquely extend the partition by adding $x$ in the family of $y$. Besides, in $B_u$ there is exactly one vertex of each family.
\end{proof}

\subsection{Coherent colorings}

Let $T$ be an $\ell$-complete tree decomposition of $G$. A coloring $\alpha$ is \emph{$X$-coherent} (relatively to $T$) if for every $x,y \in X$ which are parents, $\alpha(x)=\alpha(y)$ and for every bag $B$ and every $x \in X$, if $x \in B$, then, in $B$, only $x$ is colored with $\alpha(x)$.
Note that since parents are non-adjacent in the graph by Remark~\ref{rem:family}, coherent colorings can be proper. Note also that $V(G)$-coherent colorings are $\ell$-proper coloring.

The subsection is organized as follows. First we define the notion of merged graphs. Then we prove that distance between $V(G)$-coherent colorings is linear. And we finally provide some recoloring lemmas concerning $(V \setminus B_u)$-coherent colorings. All these tools will be used in the next subsection.

Let $G$ be a graph and $\mathcal{C}$ be a stable set. The \emph{merged graph} on $\mathcal{C}$ is the graph $G$ where vertices of $\mathcal{C}$ are identified into a vertex $z$ and $xz$ is an edge if there exists a vertex $y \in \mathcal{C}$ such that $xy$ is an edge. A coloring $\gamma$ of the merged graph can be \emph{extended} on the whole graph by coloring every vertex of $\mathcal{C}$ with $\gamma(z)$. For any stable sets $\mathcal{C}_1,\mathcal{C}_2,\cdots,\mathcal{C}_p$ with $\mathcal{C}_i \cap \mathcal{C}_j = \emptyset$ for any $i \neq j$, the \emph{merged graph} on $\mathcal{C}_1,\mathcal{C}_2,\cdots,\mathcal{C}_p$ is the graph obtained from $G$ by merging successively $\mathcal{C}_1, \mathcal{C}_2 \cdots \mathcal{C}_p$.

\begin{remark}\label{rem:merged}
Let $\mathcal{C}$ be a stable set. Let $\alpha', \beta'$ be two colorings of the merged graph on $\mathcal{C}$ and $\alpha,\beta$ be their extended colorings. If $\alpha'$ can be transformed into $\beta'$ by recoloring each vertex at most $t$ times, then $\alpha$ can be transformed into $\beta$ by recoloring every vertex at most $t$ times.
\end{remark}
\begin{proof}
We just have to follow the recoloring process of $\alpha'$ into $\beta'$. If the recolored vertex is not the merged vertex, then do the same recoloring for the extended colorings. Otherwise, we recolor the vertices one after another in the extended graph. All these are proper since $\mathcal{C}$ is a stable set.
\end{proof}

\begin{lemma}\label{compatibleson}
Let $k \geq tw(G)+2$. If every $k$-coloring of $G$ can be transformed into $V$-coherent coloring with at most $f(n)$ recolorings, then the $k$-recoloring diameter of $G$ is at most $2\cdot(f(n)+n)$.
\end{lemma}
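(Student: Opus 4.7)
The plan is to sandwich $d(\alpha,\beta)$ using the hypothesis on both sides: given $k$-colorings $\alpha,\beta$ of $G$, I would first invoke the hypothesis to obtain $V$-coherent colorings $\alpha',\beta'$ with $d(\alpha,\alpha'),\,d(\beta,\beta') \leq f(n)$. By the triangle inequality it then suffices to show $d(\alpha',\beta') \leq 2n$, which combined gives $d(\alpha,\beta) \leq 2f(n)+2n = 2(f(n)+n)$ as required.

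For the middle step, I would fix a $tw(G)$-complete tree decomposition of $G$, which exists by Lemma~\ref{completetree}, and work with the $tw(G)+1$ families produced by Remark~\ref{rem:family}. The two defining conditions of $V$-coherence force each family to be monochromatic and the representatives of distinct families in any common bag to receive distinct colors; since every bag contains exactly one vertex per family, the $tw(G)+1$ families therefore receive pairwise distinct colors globally. Hence a $V$-coherent $k$-coloring of $G$ corresponds exactly to a proper $k$-coloring of the clique $K_{tw(G)+1}$ on the family set. Since $k \geq tw(G)+2$, Lemma~\ref{lemma:clique} applied to $K_{tw(G)+1}$ produces a family-level recoloring sequence from the image of $\alpha'$ to the image of $\beta'$ in which each family is recolored at most twice. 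Lifting this sequence to $G$ by the procedure of Remark~\ref{rem:merged}, each family being a stable set, recolors each vertex of $G$ at most twice, yielding $d(\alpha',\beta') \leq 2n$.

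The only nontrivial checkpoint I anticipate is verifying that each intermediate $G$-coloring during the lift is proper. When a family $F$ is being changed from color $c$ to color $c'$, its vertices are recolored one at a time; intra-family edges do not exist since $F$ is stable, and every vertex outside $F$ keeps its color during this family step, a color which differs from both $c$ and $c'$ by the $K_{tw(G)+1}$ invariant maintained by Lemma~\ref{lemma:clique}. This is the one place where the clique structure of the family graph and the stability of families must be combined; the remainder of the argument is direct bookkeeping through the triangle inequality.
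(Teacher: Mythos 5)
Your proof is correct and follows essentially the same route as the paper: obtain $V$-coherent colorings within $f(n)$ steps, observe that the family partition turns such colorings into proper colorings of a clique on $tw(G)+1$ merged vertices, apply Lemma~\ref{lemma:clique} and lift via Remark~\ref{rem:merged} (using that families are stable sets) to get $d(\alpha',\beta')\leq 2n$, and finish with the triangle inequality. Your explicit check that intermediate colorings stay proper during the lift is a welcome detail that the paper delegates to Remark~\ref{rem:merged}.
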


\begin{proof}
Let $\alpha, \beta$ be two $k$-colorings of $G$. By assumption, there are two $V$-coherent colorings $\gamma_\alpha$ and $\gamma_\beta$ such that $d(\alpha,\gamma_\alpha) \leq f(n)$ and $d(\beta,\gamma_\beta) \leq f(n)$. 

Let us prove that $d(\gamma_\alpha,\gamma_\beta)\leq 2n$. By definition, all the vertices of a same family are colored identically in $\gamma_\alpha$. The same holds for $\gamma_\beta$. Let $G'$ be the merged graph where every family is identified into a same vertex. By Remark~\ref{rem:family}, the family partition is unique, so both $\gamma_\alpha$ and $\gamma_\beta$ are extensions of $\gamma_\alpha'$ and $\gamma_\beta'$ colorings of $G'$. Every pair of vertices of $G'$ have distinct colors in $\gamma_\alpha$ (and in $\gamma_\beta$). So $G'$ can be considered as a clique on $tw(G)+1$ vertices (since there are $tw(G)+1$ families).

Therefore Lemma~\ref{lemma:clique} and Remark~\ref{rem:merged} ensures that $d(\gamma_\alpha,\gamma_\beta)\leq 2n$.
Since $d(\alpha,\beta) \leq d(\alpha,\gamma_\alpha)+d(\gamma_\alpha,\gamma_\beta)+d(\gamma_\beta,\beta)$, Lemma~\ref{compatibleson} holds.
\end{proof}

Let us first make some observation for the two forthcoming lemmas. Let $T$ be a tree and $u$ be a node of $T$. We can consider that $T$ is rooted on $u$. Then $w$ is a \emph{father of $v$} if $vw$ is an edge and $v$ is not in the connected component of $u$ in $T \setminus w$ (and $T_u=T$). The \emph{tree rooted on $v$}, denoted by $T_v$, is the connected component of $v$ in $T \setminus w$. Let us first prove some stability on $(V \setminus B_u)$-coherent colorings.

\begin{lemma}\label{claim:everywhere}
Let $T$ be an $\ell$-complete tree decomposition and $u,v$ be two nodes of $T$. Let $\alpha$ be a $(V \setminus B_u)$-coherent coloring where color $a$ does not appear in $B_u$. 

If a vertex of $B_v$ is colored with $a$, every bag of $T_v$ contains a vertex colored with $a$.
\end{lemma}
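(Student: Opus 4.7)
The plan is to root $T$ at $u$ and prove, by induction on the distance from $v$ inside $T_v$, that \emph{every} bag of $T_v$ contains some vertex of color $a$. The base case is $B_v$ itself, which contains such a vertex by hypothesis, so no work is needed there.

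For the inductive step I would pick any node $w' \in T_v$ distinct from $v$, let $w$ be its father in $T$ (which still lies in $T_v$ because $T_v$ is the subtree of $v$ in $T$ rooted at $u$), and apply the induction hypothesis at $w$: there is some $x \in B_w$ with $\alpha(x)=a$. If $x \in B_{w'}$, there is nothing to do. Otherwise, $\ell$-completeness forces $\{x\} = B_w \setminus B_{w'}$ and $\{y\} = B_{w'} \setminus B_w$ for a unique vertex $y$, and by definition $x$ and $y$ are $T$-parents. Since no vertex of $B_u$ has color $a$, we immediately get $x \in V \setminus B_u$. If I can also show $y \in V \setminus B_u$, then the $(V \setminus B_u)$-coherence of $\alpha$ forces $\alpha(y) = \alpha(x) = a$, and so $B_{w'}$ still contains a vertex of color $a$, closing the induction.

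The only non-mechanical step is therefore establishing $y \notin B_u$. The plan here is a contradiction argument using the subtree property of tree decompositions. Suppose $y \in B_u$. The set of nodes of $T$ whose bags contain $y$ is then a connected subtree that includes $u$ and $w'$ but not $w$ (since $y \notin B_w$). However, $w$ must lie on the unique path from $u$ to $w'$ in $T$: as $w'$ belongs to $T_v$ and $T$ is rooted at $u$, this path has to enter $T_v$ at $v$, then descend through $w$ (possibly with $w=v$), and finish at $w'$ in one more step. This contradicts connectedness of the subtree of $y$, hence $y \in V \setminus B_u$.

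I expect this subtree-connectedness contradiction to be the main (indeed the only) obstacle. Once $y \in V \setminus B_u$ is secured, the coherence hypothesis finishes the induction step mechanically, and the base case is a direct reading of the assumption.
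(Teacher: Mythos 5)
Your proof is correct and is essentially the paper's argument: the paper phrases it as a minimal-counterexample choice of a bag without color $a$ nearest to $v$, while you phrase it as induction on the distance from $v$, but both propagate color $a$ along each father--child edge by noting the departing $a$-colored vertex and the newly arriving vertex are $T$-parents, and both exclude the new vertex from $B_u$ via the subtree-connectivity property through the father node. No gap; the two write-ups differ only in presentation.
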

\begin{proof}
Assume by contradiction that a node $w$ of $T_v$ does not contain a vertex of colored with $a$ in its bag. Choose $w$ in such a way $w$ is as near as possible from $v$ in $T$. Then the father $w'$ of $w$ contains a vertex $y$ of colored with $a$. 

The vertex $y$ is not in $B_u$ since $y$ is colored with $a$. Let $z= B_w \setminus B_{w'}$. We have $z \notin B_u$ since $z \notin B_{w'}$ and $w'$ is the father of $w$. Since $\alpha$ is $(V\setminus B_u)$-coherent, we have $\alpha(y)=\alpha(z)$. But $\alpha(y)=a$, a contradiction.
\end{proof}

\begin{lemma}\label{pfclaim1}
Let $k,\ell$ be two integers with $k \geq \ell+2$.
Let $T$ be an $\ell$-complete tree decomposition and $u$ be a node of $T$. Let $\alpha$ be a $(V \setminus B_u)$-coherent $k$-coloring where color $a$ does not appear in $B_u$. 

Then by recoloring every vertex of $V \setminus B_u$ at most once, we can obtain a  $(V \setminus B_u)$-coherent $k$-coloring where no vertex is colored with $a$.
\end{lemma}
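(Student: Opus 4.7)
The plan is to proceed by induction on $|V\setminus B_u|$. The base case $|V\setminus B_u|=0$ is vacuous. Otherwise, pick a leaf $v\neq u$ of $T$ and let $x=B_v\setminus B_{p(v)}$ be its baby; a brief subtree-connectedness argument shows $x\in V\setminus B_u$ (if $x\in B_u$ then $x$'s subtree would contain both $u$ and $v$ and therefore $p(v)$, contradicting $x\notin B_{p(v)}$). By Remarks~\ref{rem:subtreelcomplete} and~\ref{rem:lcomplete}, $T-v$ is an $\ell$-complete tree decomposition of $G-x$, and $\alpha|_{V-x}$ remains $(V-x-B_u)$-coherent since all parent relations not involving $x$ survive the removal. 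Applying the inductive hypothesis to $(G-x,T-v,\alpha|_{V-x})$ yields a recoloring sequence ending at some $(V-x-B_u)$-coherent $k$-coloring $\beta'$ with no vertex colored $a$.

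When $\alpha(x)=a$, I lift this sequence to $G$ keeping $x$ at color $a$ throughout; every intermediate coloring stays proper because the inductive hypothesis forbids $a$ on $V-x$ and the neighbors of $x$ all lie in $B_v\setminus\{x\}$. Then I perform one final recoloring of $x$. Setting $y=B_{p(v)}\setminus B_v$, the unique $T$-parent of $x$: if $y\in V\setminus B_u$, coherence forces $c_x=\beta'(y)$, a choice that is simultaneously proper and coherence-preserving because the $(V-x-B_u)$-coherence of $\beta'$ on $B_{p(v)}$ gives $\beta'(y)\neq\beta'(z)$ for every $z\in B_v\setminus\{x\}=B_{p(v)}\setminus\{y\}$. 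If instead $y\in B_u$, no parent constraint binds $c_x$, so I pick any color avoiding $a$ and the $\leq\ell$ $\beta$-colors of $B_v\setminus\{x\}$, which is feasible since $k\geq\ell+2$.

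When $\alpha(x)\neq a$, I attempt the same lift while keeping $x$'s color; it stays proper if and only if $\beta'(z)\neq\alpha(x)$ for every $z\in N(x)\subseteq B_v\setminus\{x\}$. If $y\in V\setminus B_u$, then $\alpha(y)=\alpha(x)$ by the coherence of $\alpha$, and provided the inductive hypothesis preserves $\beta'(y)=\alpha(y)$, the coherence of $\beta'$ on $B_{p(v)}$ supplies the required inequality. I expect the main obstacle to be that the hypothesis may genuinely be forced to recolor vertices of non-$a$-pieces (a small example with $k=\ell+2=3$ shows this can happen when a non-$a$-piece carrying the target color of an $a$-piece lies in a common bag), so $\beta'(y)$ need not equal $\alpha(y)$, and more seriously, an $a$-piece with a vertex in $N(x)\cap(V\setminus B_u)$ may be recolored to exactly $\alpha(x)$. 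My plan to resolve this is to strengthen the inductive statement so that it admits one additional constraint of the form ``a designated $a$-piece $P^\star$ avoids a specified color $c^\star\neq a$'', and then invoke the strengthened statement with $P^\star$ the offending piece in $B_v$ and $c^\star=\alpha(x)$; the slack from $k\geq\ell+2$ accommodates one such extra per-bag exclusion, with the tight boundary case handled by preemptively recoloring $x$ itself to a color made available by the coherence of the partial assignment in $B_v$ (possible because $x$ is a baby and $y\in B_u$, so $x$'s piece is the singleton $\{x\}$ and no parent constraint is violated).
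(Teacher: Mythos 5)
Your reduction step (strip a baby $x$ from a leaf bag, recolor $G-x$ by induction, then lift the sequence and recolor $x$ once at the end) has a genuine gap, and you essentially concede it yourself: when $\alpha(x)\neq a$, keeping $x$ at $\alpha(x)$ during the lifted sequence is not proper as soon as some neighbor $z\in B_v\setminus\{x\}$ with $z\notin B_u$ is recolored to $\alpha(x)$ at an intermediate step, and the inductive statement gives you no control whatsoever over the colors that recolored vertices receive. Your proposed repair --- strengthening the induction so that one designated $a$-piece avoids one specified color $c^\star$ --- is only a plan, and it does not obviously suffice: eliminating $a$ forces a cascade in which pieces of \emph{other} colors are also recolored (to make a color available in a bag, that color must first be evacuated from the relevant subtree), so the offending neighbor $z$ need not lie in an $a$-piece at all, and a single per-piece exclusion does not cover it. Moreover, the argument is tight at $k=\ell+2$: one needs the full slack just to find a replacement color absent from a bag, so budgeting an additional forbidden color per bag is exactly the kind of counting that must be checked and is not. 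Finally, the ordering problem (recoloring $x$ to $\beta'(y)$ early may be blocked by a neighbor still holding $\beta'(y)$, while recoloring it late may be blocked by a neighbor already moved to $\alpha(x)$) is left unresolved; the one clean case is when $y\notin B_u$ and $y$ is never recolored, but that is not the general situation.

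For comparison, the paper avoids the vertex-by-vertex reinsertion entirely and inducts on the width $\ell$: using the propagation property (Lemma~\ref{claim:everywhere}), the vertices colored $a$ fill every bag of a maximal subtree $T_v$ whose root bag's father avoids $a$; deleting this $a$-colored family from $T_v$ leaves an $(\ell-1)$-complete decomposition, so one can recursively eliminate a color $b$ chosen absent from $B_v$ (with $k-1\geq(\ell-1)+2$), and only then recolor the whole $a$-family to $b$ in one pass, repeating over the disjoint maximal subtrees containing $a$. That structural recursion is precisely what tames the cascade your approach runs into: the replacement color is fully evacuated before the $a$-vertices move, so no ordering conflict with fixed vertices ever arises. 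If you want to salvage your induction on $|V\setminus B_u|$, you would need an inductive statement strong enough to control the final colors on the entire bag $B_{p(v)}$ (not just on one $a$-piece), together with an explicit scheduling argument for the single recoloring of $x$; as written, the proof is not complete.
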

\begin{proof}
Let us prove it by induction on $\ell$. We enforce the induction hypothesis with the following: if a vertex $x$ is recolored, then there is a vertex $z$ in the family of $x$ such that $\alpha(x)=\alpha(z)$. 

If $\ell=0$, then the graph has no edge. Let $x$ be the vertex of $B_u$. Color $a$ can be eliminated by recoloring every vertex at most once. If $y$ is recolored, then $\alpha(y)=a$. Since $\alpha(x) \neq a$ and there is a unique family, the enforced hypothesis holds.

Otherwise, by Claim~\ref{claim:everywhere}, if the color a vertex of $B_v$ is colored with $a$, the color $a$ appears in every bag of $T_v$. Choose $v$ in such a way the father of $v$ does not contain $a$ in its bag. Let $b$ be a color which does not appear in $B_v$. 

Consider the graph $G'$ induced by the vertices of $T_v$ where vertices colored with $a$ in $\alpha$ and forget the color $a$ from the color set. $T_v$ is an $(\ell-1)$-complete tree decomposition of $G'$. Both $\ell$ and $k$ decrease by one. Thus, by induction hypothesis, the color $b$ can be eliminated from $T_v$ by recoloring every vertex of $T_v \setminus B_v$ at most once. And the obtained coloring $\beta$ is $(V \setminus B_v)$-coherent.

\begin{claim}
 If two parents are not colored the same, then one of them is in $B_u$.
\end{claim}
\begin{proof}
Assume by contradiction that two parents $x,y \notin B_u$ satisfies $\beta(x) \neq \beta(y)$. Since $\alpha$ is $(V \setminus B_u)$-coherent, $\alpha(x)=\alpha(y)$. Since only vertices of $V(T_v) \setminus B_v$ are recolored, $x$ or $y$, say w.l.o.g. $x$, are in $V(T_v) \setminus B_v$. So $y$ is in $V_{T_v}$. Indeed otherwise $x$ and $y$ cannot be parents.

Since $\beta$ is $(V(T_v) \setminus B_v)$-coherent, $y \in B_v$. So $x$ is recolored during the process. Therefore, there is a vertex of the family of $x$ in $V(T_v) \setminus B_v$ which is not colored as $y$ in $V(T_v)$. Since parents in $\alpha$ are colored identically except if one of them is in $B_u$ means that $y \in B_u$, a contradiction.
\end{proof}

We can repeat this operation on the other rooted subtrees maximum by inclusion which contains color $a$. No vertex is recolored twice since the subtrees are independent. Indeed, otherwise it means that the father of a root of a subtree contains a vertex colored with $a$.
\end{proof}

\subsection{Obtaining a $V$-coherent coloring}
In order to prove Theorem~\ref{thm:tw}, Lemma~\ref{compatibleson} ensures that we just have to transform any coloring into a $V$-coherent coloring in $n^2$ recolorings.
For any subtree $T'$ of $T$, $B_{T'}$ denotes $\cup_{v \in T'} B_v$.

\begin{lemma}\label{lem:mainlemma}
Let $T$ be a $tw(G)$-complete tree decomposition. For every $\ell$-coloring $\alpha$ of $G$, there is a $V$-coherent coloring $\gamma_\alpha$ such that $d(\alpha,\gamma_{\alpha}) \leq n^2$.
\end{lemma}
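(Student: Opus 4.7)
I plan to prove Lemma~\ref{lem:mainlemma} by induction on $n$, using Lemma~\ref{pfclaim1} as the key tool to manage the interaction between local changes near a baby and the rest of the graph.

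The base case is when $T$ consists of a single node $u$, so $V = B_u$ and $n \leq \ell+1$. By Remark~\ref{rem:family} every family is a singleton, so $V$-coherence reduces to requiring pairwise distinct colors on all $n$ vertices. Since $k \geq \ell+2 \geq n+1$ an unused color is always available, and such a target is reachable by recoloring each vertex at most once, for a total of at most $n \leq n^2$ recolorings.

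For the inductive step, pick a baby $x$ sitting in a leaf bag $B_u$, let $v$ be the father of $u$, and let $y := B_v \setminus B_u$ be the unique parent of $x$. Set $G' := G \setminus x$ and $T' := T \setminus u$; by Remark~\ref{rem:lcomplete}, $T'$ is an $\ell$-complete tree decomposition of $G'$, so the inductive hypothesis yields a $V(G')$-coherent coloring $\gamma'$ at distance at most $(n-1)^2$ from $\alpha|_{G'}$ in $R_k(G')$. After lifting this sequence into $R_k(G)$, a single further recoloring of $x$ to $\gamma'(y)$ produces a $V$-coherent coloring of $G$: by $V(G')$-coherence the bag $B_v$ has pairwise distinct colors in $\gamma'$, so $\gamma'(y)$ is absent from $B_v \setminus y = N(x)$; moreover, the family of $x$ in $T$ coincides with the family of $y$ in $T'$ augmented by $x$, so all those vertices now share the color $\gamma'(y)$.

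The main obstacle is lifting the inductive sequence from $R_k(G')$ to $R_k(G)$: at some step a vertex $z \in N(x)$ might be assigned $x$'s current color, breaking properness in $G$. The plan is to exploit Lemma~\ref{pfclaim1}. A key observation is that $V(G')$-coherence with respect to $T'$ implies $(V \setminus B_u)$-coherence with respect to $T$, since $V \setminus B_u \subseteq V(G')$ and the bags of $T$ distinct from $B_u$ coincide with those of $T'$. After the inductive process we apply Lemma~\ref{pfclaim1} to eliminate from $V \setminus B_u$ any color not appearing in $B_u$ --- in particular, one can reserve the intended color for $x$'s family --- at a cost of at most $|V \setminus B_u| \leq n - \ell - 1$ extra recolorings. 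The total budget stays within $(n-1)^2 + n + 1 \leq n^2$. The delicate technical point is arranging the three phases (the lifted inductive sequence, the application of Lemma~\ref{pfclaim1}, and the final recoloring of $x$) into a single valid sequence in $R_k(G)$ without a blow-up from intermediate conflicts.
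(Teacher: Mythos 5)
Your inductive set-up (peel off a baby $x$, make $G' = G \setminus x$ coherent by induction, then recolor $x$ once at the end) founders on exactly the point you flag as ``the delicate technical point'': the lifting of the inductive recoloring sequence from $R_k(G')$ to $R_k(G)$ is not justified, and nothing in your proposal repairs it. The conflicts occur \emph{during} the lifted sequence, when some neighbor $z \in N(x) \subseteq B_u$ is recolored to the color currently held by $x$; applying Lemma~\ref{pfclaim1} \emph{after} the inductive process only cleans up the final coloring and does nothing about these intermediate collisions. The obvious patch --- recolor $x$ to a free color whenever such a collision is about to happen --- is available (since $k \geq tw(G)+2 > |N(x)|$ a free color always exists), but it is not free of charge: $x$ may have to move once per collision, and the number of collisions is only bounded by the length of the inductive sequence. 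The recursion then becomes roughly $f(n) \leq 2f(n-1) + O(n)$, which is exponential, not $n^2$. So as written the quantitative claim $d(\alpha,\gamma_\alpha) \leq n^2$ does not follow; the budget ``$(n-1)^2 + n + 1$'' silently assumes the lifted sequence is valid verbatim in $G$, which is the whole difficulty.

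The paper's proof is structured precisely to avoid ever lifting a sequence from a subgraph. It runs a single iterative algorithm on $G$: vertices are treated one by one (always a baby of the remaining decomposition, using Remark~\ref{rem:lcomplete}), and the invariant is that the current coloring of the \emph{whole} graph is $F_i$-coherent, where $F_i$ is the set of treated vertices. At the step treating $x \in B_u$, only vertices of the residual components attached to $u$ --- which all lie in $F_i \cup B_u$ --- are touched: Lemma~\ref{pfclaim1} is used there to free a color $a$ absent from $B_u$ (each treated vertex recolored at most once, $B_u$ untouched, so properness of the global coloring is never at risk), and then $x$ together with the treated members of its family is recolored to $a$. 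Each step recolors previously treated vertices at most twice and the new vertex once, giving $\sum_i \bigl(1 + 2(n-i)\bigr) \leq n^2$. If you want to keep your induction-on-$n$ skeleton, you would need an induction hypothesis strong enough to control how often the deleted vertex's neighborhood colors are disturbed, which is essentially what the paper's $F_i$-coherence invariant and the restriction of all recolorings to the residual subtrees accomplish.
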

\begin{proof}
The proof consists in a recoloring algorithm. We treat vertices one after the other, considering vertices that have at most one parent not yet treated. In other words, we treat babies of the remaining tree-decomposition. Our invariant will ensure that, when $X$ is treated, the current coloring is $X$-coherent. When a new vertex $x$ is treated, we just have to transform the current coloring in order to obtain a $(X \cup \{ x \})$-coherent coloring. At the end of the procedure, the whole vertex set is treated, and then the current coloring is $V$-coherent.

Let us now describe more formally the invariants. The set $F_i$ represents treated vertices at step $i$. Initially, no vertex is treated, so $F_0=\emptyset$. The coloring $c_i$ is the current $\ell$-coloring at the end of step $i$. Initially the coloring is $\alpha$, so $c_0=\alpha$. The invariants at the end of step $i$ are:
\begin{enumerate}[(i)]
\item\label{p1} $F_{i-1} \subset F_i \subseteq V$, and $|F_i|=i$.
\item\label{p2} $T[V \setminus {F_i}]$ is a $\min(tw(G),|V\setminus F_i|)$-complete tree decomposition of $G\setminus F_i$.
\item\label{p3} $c_i$ is an $\ell$-coloring of $G$ obtained from $c_{i-1}$ by recoloring vertices of $F_i$ at most twice.
\item\label{p4} $c_i$ is $F_i$-coherent.
\end{enumerate}

We proceed iteratively on $i$ from $1$ to $n$. Let $u$ be a leaf of $T[V \setminus F_i]$ and $x$ be a baby contained in $B_u$. We want to add $x$ in $F_i$. Denote by $F_{i+1}$ the set $F_i \cup x$. By Remark~\ref{rem:lcomplete} and since $x$ is a baby, $T[V \setminus F_{i+1}]$ is a complete tree decomposition. Thus (\ref{p1}) and (\ref{p2}) are immediately verified. The following consists in proving (\ref{p3}) and (\ref{p4}).

%A \emph{residual component of a node $u$} is a connected component of $T \setminus u$ that does not exist in $T[V \setminus F_i]$. On Fig.~\ref{fig:completedecomp}, subtrees $T_1^*$ and $T_2^*$ are the residual components of $u$ in $T[V \setminus \{ x_4,x_5,x_6,x_7,x_9 \}]$.

A \emph{residual component} is a connected component of $T \setminus T[V \setminus F_i]$. Informally, a residual component is a subtree of the tree decomposition containing already treated vertex. A \emph{residual component of $u$} is a residual component containing a node adjacent to $u$. Note that vertices which appear in a bag of such a residual component are included in $F_i \cup B_u$.
In Fig.~\ref{fig:completedecomp}, subtrees $T_1^*$ and $T_2^*$ are the residual components of $u$ in $T[V \setminus \{ x_4,x_5,x_6,x_7,x_9 \}]$.

Let $F$ be the union of the residual components on $u$. And let $T^*$ be the subtree $\{u\} \cup F$. Let us consider the graph $G'$ restricted to the vertices of $T^*$. Let $a$ be a color which does not appear in $B_u$. Note that the coloring $c_i$ restricted to $G'$ is $(V(G') \setminus B_u)$-coherent. Indeed, the vertices of $V(G') \setminus B_u$ are in $F_i$, and the coloring $\alpha$ is $F_i$-coherent. 

Therefore, Lemma~\ref{pfclaim1} can be apply. So $c_i$ can be transformed into a $(B_{T^*}\setminus B_u)$-coherent coloring of $G'$ where no vertex is colored with $a$. Every vertex of $F_i$ is recolored at most once. Note that since vertices of $B_u$ are not recolored, the obtained coloring is proper on the whole graph.

Recall that no vertex of $B_u$ are in $F_i$. Therefore, if two vertices of $F_i$ are parents, either they are both in $B_{T^*}$ or both are not in $B_{T^*}$. So the resulting coloring is $F_i$-coherent.

Thus all the members of the family of $x$ which are in $F_i$ can be recolored with $a$, as the vertex $x$ itself. Every vertex is recolored at most once. 
Finally every vertex is recolored at most twice. So the resulting coloring $c_{i+1}$ satisfies condition (\ref{p3}) and (\ref{p4}).

This operation is repeated until $F_i=V$, that is, $i=|V|$. When the last vertex is treated the coloring is $V$-coherent by (\ref{p4}). It follows from (\ref{p3}) that to recolor $G$ from $\alpha$ to $\gamma_\alpha=c_{n}$, it suffices to recolor each vertex $x$ at most $2\cdot (n-i+1)$, where $i$ is the smallest such that $x \in F_i$. Thus, on the whole, it suffices to make $2\cdot \frac{n(n+1)}{2}=n^2+2n$ recolorings.

The analysis can be slightly improved. Indeed, the vertex $x_i$ treated at step $i$ is recolored at most once (since vertices of $B_u$ are not recolored in the first part of the recoloring algorithm). Therefore, every vertex is recolored at most $1+2 \cdot(n-i)$ times, which finally ensures that $d(\alpha,\gamma_\alpha) \leq n^2$.
\end{proof}

\section{Further work}

Graphs of treewidth at most $k$ are $k$-degenerate graphs. The $(k+2)$-recoloring diameter of $k$-degenerate graphs at most $2^n$~\cite{Cereceda}. Note that the bound on the number of colors is optimal since $K_n$ is $(n-1)$-degenerate. Does the class of $k$-degenerate graphs have a polynomial $(k+2)$-recoloring diameter? Or, a weaker question, can we obtain a polynomial recoloring diameter when the number of color increases?

This question seems very challenging. The class of $k$-degenerate graphs also contains some sub-classes that are themselves interesting. One of the most famous is the class of planar graphs (which are $5$-degenerate).
\begin{conjecture}
For any planar graph $G$ and any integer $k$, if $k \geq 7$ then $R_k(G)$ has a polynomial diameter.
\end{conjecture}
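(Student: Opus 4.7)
The plan is to combine the two main structural properties of planar graphs: $5$-degeneracy (which, together with Cereceda's result, guarantees $7$-mixing and provides a natural ``reduction vertex'' at each step) and the existence of small balanced separators of size $O(\sqrt n)$ given by the Lipton-Tarjan theorem. A pure degeneracy argument along the lines of Theorem~\ref{thm:grundy} would essentially treat vertices one by one and cannot be expected to yield a polynomial diameter with only $d+2$ colors on a $d$-degenerate graph (this is precisely the open question raised just above the conjecture). The extra slack of two colors above the chromatic number given by the Four Color Theorem should be used in conjunction with the separator structure to obtain a global handle on the graph.

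Concretely, I would proceed by strong induction on $n$. First apply the planar separator theorem to obtain a separator $S$ of size $O(\sqrt n)$ splitting $V \setminus S$ into pieces $A$ and $B$ of size at most $2n/3$. The central task is then to show that any two $7$-colorings $\alpha,\beta$ can be brought, in $\mathrm{poly}(n)$ recolorings, to $7$-colorings $\alpha',\beta'$ that agree on $S$; once this is done, the induction hypothesis can be applied separately to $G[A\cup S]$ and $G[B\cup S]$ with $S$ frozen. The remaining vertices in each side still form a $5$-degenerate graph, so enough free colors remain at each vertex of $A$ and $B$ to apply the recursive recoloring procedure. A recursion of the form $T(n)\leq 2T(2n/3)+\mathrm{poly}(n)$ then solves to a polynomial.

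The synchronization step on $S$ is where the tree-decomposition machinery of Section~\ref{sect:prooftw} should be adapted. One would like to cover $S$ by a tree-like structure inherited from a recursive separator decomposition and iteratively propagate colors along it, using the spare colors (relative to degeneracy) to ``park'' vertices on a free color while their neighbors are rearranged. The analogue of Lemma~\ref{pfclaim1} --- that an unused color can be eliminated from a region by recoloring each vertex a bounded number of times --- should plausibly go through using $5$-degeneracy in place of the global treewidth bound, since each individual recoloring is local to a vertex's neighborhood.

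The main obstacle is precisely this last point. With only $7$ colors on a planar graph, the local freedom at each vertex is genuine but limited, and propagating a canonical coloring across a separator that is not itself a clique requires a global argument that does not seem to follow from degeneracy alone. In particular, the proof of Lemma~\ref{pfclaim1} uses in an essential way that adjacent bags of a complete tree decomposition differ by exactly one vertex, which is very special to the bounded-treewidth setting. Replacing this by a genuinely planar argument --- perhaps via discharging, or via a careful analysis of Kempe chains in $7$-colorings of planar graphs --- is where the real work lies, and where the conjecture may turn out to be hardest to settle.
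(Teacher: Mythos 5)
This statement is a \emph{conjecture} in the paper: the authors do not prove it, they explicitly pose it as an open problem (and only justify that the bound $7$ would be optimal, via examples of planar graphs that are not $5$- or $6$-mixing). So there is no paper proof to compare against, and what you have written is not a proof either --- it is a programme, and you candidly identify its unsolved core yourself. The synchronization step on the separator $S$ is exactly the part that carries all the difficulty: with only $7$ colors there is no analogue of the clique structure of bags in a complete tree decomposition, so nothing like Lemma~\ref{pfclaim1} is available, and asserting that a degeneracy-based version ``should plausibly go through'' is not an argument --- indeed the open question stated just before the conjecture (polynomial diameter for $k$-degenerate graphs with $k+2$ colors) is essentially the statement you would need.

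There is also a second, less visible gap in the recursion itself. Even if you could make $\alpha$ and $\beta$ agree on $S$ in polynomially many steps, recursing on $G[A\cup S]$ and $G[B\cup S]$ \emph{with $S$ frozen} is not an instance of the induction hypothesis: the hypothesis speaks of unconstrained $7$-recoloring of a planar graph, whereas you now need to connect two colorings that agree on $S$ by a path along which $S$ is never recolored (equivalently, a list-recoloring statement where vertices of $S$ have singleton lists and their neighbors lose a color). Such constrained reconfiguration statements are strictly stronger and can fail even when the unconstrained one holds, so the recursion $T(n)\leq 2T(2n/3)+\mathrm{poly}(n)$ is not justified as written. To pursue this route you would need to formulate and prove a boundaried (list) version of the conjecture that is stable under the separator recursion --- which, again, is where the real problem lies, and why the statement remains open.
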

This bound of $7$ would be optimal since there are planar graphs that are not $5$-mixing (see Fig.~\ref{fig:mix5}) or not $6$-mixing (see Fig.~\ref{fig:mix6}).

\begin{figure}
\centering
\parbox[b]{3in}{
\centering
 \includegraphics[scale=0.6]{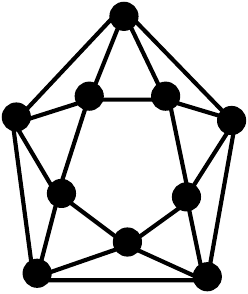}
 \caption{A planar graph that is not 5-mixing.}
 \label{fig:mix5}
}
\qquad
\parbox[b]{3in}{
\centering
 \includegraphics[scale=0.6]{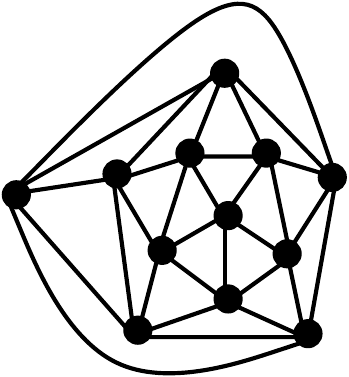}
 \caption{A planar graph that is not 6-mixing.}
 \label{fig:mix6}
 }
\end{figure}

Note that outerplanars graphs have a quadratic recoloring diameter since they have treewidth at most $2$. The quadratic lower bound is optimal~\cite{BonamyJ12} (see Fig.~\ref{fig:outerplanar}).

\begin{figure}
\centering
 \includegraphics[scale=0.9]{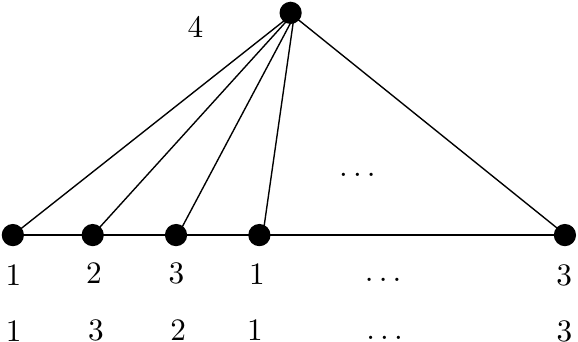}
 \caption{A $3$-colorable outerplanar graph which has a quadratic recoloring diameter.}
 \label{fig:outerplanar}
\end{figure}

Let us consider, as in \cite{Cereceda}, the complete bipartite graph on $2n$ vertices minus a matching (see Fig.~\ref{fig:completmoinsmatching}). The mixing number equals $n+1$, and the chromatic number equals two. Consequently, the mixing number of this family of graphs cannot be bounded by a function of its chromatic number. The same holds for any graph class containing all bipartite graphs. In particular, the mixing number of comparability graphs, perfectly orderable graphs, and perfect graphs cannot be bounded by a function of the chromatic number. This answers an open question of \cite{BonamyJ12} for perfect graphs. 

Another interesting point is the existence of a hamiltonian cycle in the recoloring graph. In other words, is it possible to find a sequence of distinct recolorings which contains all the propers colorings and such that the consecutive colorings are adjacent? Consider for instance $2$-colorings of stable sets on $n$ vertices. The corresponding graph is the $n$-dimensionnal hypercube. Such a graph admits a hamiltonian cycle, known as Gray code. Gray codes, and their generalization, were extensively studied (see~\cite{Savage96} for a survey).

\bibliographystyle{plain}
\bibliography{biblitw}

\end{document}